\synctex=1
\documentclass[twoside,11pt]{entics} 
\usepackage{enticsmacro}
\usepackage[all]{xy}
\usepackage[utf8]{inputenc}
\usepackage[T1]{fontenc}
\usepackage{microtype}
\usepackage{graphicx}
\usepackage{amsmath}
\usepackage{stmaryrd}
\usepackage{enumitem}
\usepackage{tikz,tikz-cd}
\sloppy

\usepackage{newunicodechar}
\newunicodechar{∋}{\ni}
\newunicodechar{∙}{\cdot}
\newunicodechar{·}{\cdot}
\newunicodechar{⊢}{\vdash}
\newunicodechar{⋆}{{}^\star}
\newunicodechar{Π}{\Pi}
\newunicodechar{⇒}{\Rightarrow}
\newunicodechar{ƛ}{\lambdabar}
\newunicodechar{∅}{\emptyset}
\newunicodechar{∀}{\forall}
\newunicodechar{∃}{\exists}
\newunicodechar{ϕ}{\Phi}
\newunicodechar{ψ}{\Psi}
\newunicodechar{ρ}{\rho}
\newunicodechar{α}{\alpha}
\newunicodechar{β}{\beta}
\newunicodechar{δ}{\delta}
\newunicodechar{π}{\pi}
\newunicodechar{τ}{\tau}
\newunicodechar{ι}{\iota}
\newunicodechar{μ}{\mu}
\newunicodechar{σ}{\sigma}
\newunicodechar{≡}{\equiv}
\newunicodechar{Γ}{\Gamma}
\newunicodechar{∥}{\parallel}
\newunicodechar{Λ}{\Lambda}
\newunicodechar{θ}{\theta}
\newunicodechar{Θ}{\Theta}
\newunicodechar{∘}{\circ}
\newunicodechar{Δ}{\Delta}
\newunicodechar{λ}{\lambda}
\newunicodechar{⊧}{\models}
\newunicodechar{⊎}{\uplus}
\newunicodechar{η}{\eta}
\newunicodechar{Σ}{\Sigma}
\newunicodechar{ξ}{\xi}
\newunicodechar{₁}{_1}
\newunicodechar{₂}{_2}
\newunicodechar{₃}{_3}
\newunicodechar{ₙ}{_n}
\newunicodechar{ᵢ}{_i}
\newunicodechar{ᴬ}{^A}
\newunicodechar{ᴮ}{^B}
\newunicodechar{ℕ}{\mathbb{N}}
\newunicodechar{ᶜ}{{}^c}
\newunicodechar{Φ}{\Phi}
\newunicodechar{Ψ}{\Psi}
\newunicodechar{≐}{\doteq}
\newunicodechar{→}{\rightarrow}
\newunicodechar{←}{\leftarrow}
\newunicodechar{×}{\times}
\newunicodechar{⁻}{^{-}}
\newunicodechar{¹}{^1}
\newunicodechar{∗}{\ast}
\newunicodechar{≡}{\equiv}
\newunicodechar{⊥}{\bot}
\newunicodechar{⊤}{\top}
\newunicodechar{⋯}{\dots}

\makeatletter
\def\ordinarycolon{\mathchar`\:}
\def\colon{\,\mathord\ordinarycolon\,}

\protected\edef\tikz@nonactivecolon{%
  \noexpand\ifmmode\noexpand\colon\noexpand\else:\noexpand\fi
}
\begingroup\lccode`\~=`\:\lowercase{\endgroup
  \protected\def~}{\new@ifnextchar={\coloneqq\@gobble}{\colon}}
\AtBeginDocument{\mathcode`\:="8000 }
\makeatother

\let\bar\overline
\let\hat\widehat

\newcommand*{\unary}[2]{{#1}\mkern\thinmuskip{#2}}
\newcommand*{\binary}[3]{\unary{\unary{#1}{#2}}{#3}}
\newcommand*{\ternary}[4]{\unary{\binary{#1}{#2}{#3}}{#4}}
\newcommand*{\quaternary}[5]{\unary{\ternary{#1}{#2}{#3}{#4}}{#5}}
\newcommand*{\quinary}[6]{\unary{\quaternary{#1}{#2}{#3}{#4}{#5}}{#6}}
\newcommand*{\senary}[7]{\unary{\quinary{#1}{#2}{#3}{#4}{#5}{#6}}{#7}}
\newcommand*{\septenary}[8]{\unary{\senary{#1}{#2}{#3}{#4}{#5}{#6}{#7}}{#8}}

\DeclareMathOperator{\Setoperator}{\sf Set}
\newcommand*{\Set}{\Setoperator}
\DeclareMathOperator{\Reloperator}{\sf Rel}
\newcommand*{\Rel}[2]{\Reloperator\mkern\thinmuskip{#1}\mkern\thinmuskip{#2}}
\DeclareMathOperator{\dataoperator}{\sf data}
\DeclareMathOperator{\whereoperator}{\sf where}
\newcommand*{\data}[1]{\binary{\dataoperator}{#1}{\whereoperator}}
\newcommand*{\var}{\mathsf}

\DeclareMathOperator{\Listoperator}{\sf List}
\newcommand*{\List}{\unary{\Listoperator}}
\DeclareMathOperator{\niloperator}{\sf nil}
\newcommand*{\nil}{\niloperator}
\DeclareMathOperator{\consoperator}{\sf cons}
\newcommand*{\cons}{\binary{\consoperator}}
\DeclareMathOperator{\Listliftoperator}{\hat{\Listoperator}}
\newcommand*{\Listlift}{\ternary{\Listliftoperator}}
\DeclareMathOperator{\nilliftoperator}{\hat{\niloperator}}
\newcommand*{\nillift}{\nilliftoperator}
\DeclareMathOperator{\consliftoperator}{\hat{\consoperator}}
\newcommand*{\conslift}{\septenary{\consliftoperator}}

\DeclareMathOperator{\Seqoperator}{\sf Seq}
\newcommand*{\Seq}[1]{\Seqoperator\mkern\thinmuskip{#1}}
\DeclareMathOperator{\injoperator}{\sf inj}
\newcommand*{\inj}[1]{\injoperator\mkern\thinmuskip{#1}}
\DeclareMathOperator{\pairingoperator}{\sf pairing}
\newcommand*{\pairing}[2]{\pairingoperator {#1}\mkern\thinmuskip{#2}}
\DeclareMathOperator{\Seqliftoperator}{\hat{\Seqoperator}}
\newcommand*{\Seqlift}[3]{%
  \Seqliftoperator
  \mkern\thinmuskip{#1}%
  \mkern\thinmuskip{#2}%
  \mkern\thinmuskip{#3}%
}
\DeclareMathOperator{\injliftoperator}{\hat{\injoperator}}

\DeclareMathOperator{\pairingliftoperator}{\hat{\pairingoperator}}

\DeclareMathOperator{\Soperator}{\sf S}
\newcommand*{\Stype}[1]{\Soperator\mkern\thinmuskip{#1}}
\DeclareMathOperator{\ioperator}{\sf i}
\newcommand*{\icons}[1]{\ioperator\mkern\thinmuskip{#1}}
\DeclareMathOperator{\poperator}{\sf p}
\newcommand*{\pcons}[3]{\poperator
  \mkern\thinmuskip{#1}%
  \mkern\thinmuskip{#2}%
  \mkern\thinmuskip{#3}%
}
\DeclareMathOperator{\Sliftoperator}{\hat{\Soperator}}
\newcommand*{\Slift}[3]{%
  \Sliftoperator
  \mkern\thinmuskip{#1}%
  \mkern\thinmuskip{#2}%
  \mkern\thinmuskip{#3}%
}
\DeclareMathOperator{\iliftoperator}{\hat{\ioperator}}
\newcommand*{\ilift}[4]{%
  \iliftoperator%
  \mkern\thinmuskip{#1}%
  \mkern\thinmuskip{#2}%
  \mkern\thinmuskip{#3}%
  \mkern\thinmuskip{#4}%
}
\DeclareMathOperator{\pliftoperator}{\hat{\poperator}}

\DeclareMathOperator{\mapoperator}{\sf map}%
\newcommand*{\mapfun}[1]{%
  \mapoperator_{#1}%
}%
\newcommand*{\mapfunap}[2]{%
  \unary{\mapfun{#1}}{#2}%
}%
\newcommand*{\mapfunaparg}[3]{%
  \unary{\mapfun{#1}{#2}}{#3}%
}%

\DeclareMathOperator{\gmapoperator}{\sf m}%
\newcommand*{\gmapfun}[1]{%
  \gmapoperator_{#1}%
}%
\newcommand*{\gmapfunap}[2]{%
  \unary{\gmapfun{#1}}{#2}%
}%
\newcommand*{\gmapfunaparg}[3]{%
  \unary{\gmapfun{#1}{#2}}{#3}%
}%

\DeclareMathOperator{\Booloperator}{\sf Bool}
\newcommand*{\Bool}{\Booloperator}
\DeclareMathOperator{\falseoperator}{\sf false}
\newcommand*{\false}{\falseoperator}
\DeclareMathOperator{\trueoperator}{\sf true}
\newcommand*{\true}{\trueoperator}

\newcommand*{\rel}[3]{
  {#1}%
  \mkern\thinmuskip{#2}%
  \mkern\thinmuskip{#3}%
}

\newcommand*{\product}[2]{{#1}\mathbin{\sf \times}{#2}}
\newcommand*{\pair}[2]{({#1},{#2})}
\newcommand*{\productlift}[2]{{#1}\mathbin{\sf \hat\times}{#2}}

\newcommand*{\arrow}[2]{{#1}\mathbin{\sf →}{#2}}
\newcommand*{\id}[1]{{\sf id}_{#1}}
\newcommand*{\arrowlift}[2]{{#1}\mathbin{\sf \hat{→}}{#2}}

\newcommand*{\Graph}{\unary{\sf Gr}}
\newcommand*{\Graphap}{\ternary{\sf Gr}}

\volume{NN}


\begin{document}

\begin{frontmatter}
  \title{Early Announcement: Parametricity for GADTs}

  \author{Pierre Cagne\thanksref{a}\thanksref{cagnep}}
  \author{Patricia Johann\thanksref{a}\thanksref{johannp}}
  
  \address[a]{Department of Computer Science\\ Appalachian State University\\
    Boone, NC, USA}

  \thanks[cagnep]{Email: \href{mailto:cagnep@appstate.edu}
    {\texttt{\normalshape cagnep@appstate.edu}}}
  
  \thanks[johannp]{Email: \href{mailto:johannp@appstate.edu}
    {\texttt{\normalshape johannp@appstate.edu}}}
  
\begin{abstract} 
  Relational parametricity was first introduced by Reynolds for System
  F. Although System F provides a strong model for the type systems at
  the core of modern functional programming languages, it lacks
  features of daily programming practice such as complex data
  types. In order to reason parametrically about such objects,
  Reynolds' seminal ideas need to be generalized to extensions of
  System F. Here, we explore such a generalization for the extension
  of System F by Generalized Algebraic Data Types (GADTs) as found in
  Haskell. Although GADTs generalize Algebraic Data Types (ADTs) ---
  i.e., simple recursive types such as lists, trees, etc. --- we show
  that naively extending the parametric treatment of these recursive
  types is not enough to tackle GADTs. We propose a tentative
  workaround for this issue, borrowing ideas from the categorical
  semantics of GADTs known as (functorial) completion. We discuss some
  applications, as well as some limitations, of this solution.
\end{abstract}

\begin{keyword}
  Parametricity, Generalized Algebraic Data Types, Logical relations
\end{keyword}
\end{frontmatter}


\section{Introduction}
\label{sec:intro}

Relational parametricity~\cite{rey83} is a key technique for reasoning
about programs in strongly typed languages. It can be used to enforce
invariants guaranteeing strong properties of programs, programming
languages, and programming language implementations supporting
parametric polymorphism. A polymorphic program is a program that can
be applied to arguments and return results of different types; a
parametric polymorphic program is a program that is not only
polymorphic over {\em all} types, but is also defined by the same
type-uniform algorithm regardless of the concrete type at which it is
applied. Since parametric polymorphic programs cannot perform
type-specific operations, the computational behaviors they can exhibit
are actually quite constrained. Parametricity was originally put forth
by Reynolds~\cite{rey83} for System F~\cite{gir72,rey74}, the formal
calculus at the core of all polymorphic functional languages. It was
later popularized as Wadler's ``theorems for free''~\cite{wad89},
so-called because it allows the deduction of properties of programs in
such languages solely from their types, i.e., with no knowledge
whatsoever of the text of the programs involved. However, to get
interesting free theorems, Wadler actually treats System F extended
with built-in lists. Indeed, most of the free theorems in~\cite{wad89}
are essentially naturality properties for polymorphic list-processing
functions. It is easy to extend the techniques developed
in~\cite{wad89} for handling lists to non-list algebraic data types
(ADTs). Parametricity for such types can then be used to derive not
just naturality (i.e., commutativity) properties, but also results ---
such as proofs of type inhabitance and correctness of the program
optimization known as {\em short cut fusion}~\cite{glp93} --- that go
beyond simple naturality.

In his original formulation, Reynolds gives each type expression of
System F a {\em relational interpretation} defined inductively. Each
type expression $Φ$ with type variables $α₁,α₂,⋯,αₙ$ thus gives, for
each tuple $\bar R$ of relations $Rᵢ$ between types $Aᵢ$ and $Bᵢ$, a
relation $\hat{Φ}\,\bar R$ between the type $Φ[\bar A/\bar \alpha]$ and
$Φ[\bar B/\bar \alpha]$. To capture the intended type-uniformity of
System F's polymorphic expressions, these relational interpretations
are defined in such a way that every function
$f : ∀ \bar \alpha. Φ → Ψ$, where $Φ$ and $Ψ$ are two type expressions
in the same type variables $\bar \alpha$, is {\em parametric} in the
following sense: for each tuple of relations $\bar R$, the pairs
related by $\hat {Φ}\,\bar R$ are sent by $f$ to pairs related by
$\hat {Ψ}\,\bar R$.

As mentioned above, better approximations of realistic programming
languages result from adding built-in data types to System F. Each
such added data type induces a type constructor, and this type
constructor must also be given a relational
interpretation. Wadler~\cite{wad89} considers the case of lists, which
we review in detail in Section~\ref{sec:inductive}. To add a new
inductive data type constructor $T$ to an ambient parametric language
in such a way that parametricity is preserved, the method is always
the same: Define its relational interpretation as a (dependent)
inductive family $\hat T$ with one data constructor $\hat c$ for each
data constructor $c$ of $T$ expressing precisely that $c$ is a
parametric polymorphic function. The data constructors of such a data
type's relational interpretation thus make formal the intuitive
type-uniformity required of its data constructors by the grammars of
languages such as Haskell. The relational interpretation $\hat T$
captures the intuition that, if we regard data types as containers,
then two data structures of (two instances of) $T$ are related by
$\hat T\, R$ exactly when the data they store are related by $R$. This
intuition also requires that $\hat T$ preserves inclusion, i.e., that
$\hat T\, R \subseteq \hat T\, S$ whenever $R \subseteq S$. Indeed, if
two data structures are related by $\hat T\, R$, then the data they
store are related by $R$, and thus by $S$, so the two data structures
must be related by $\hat T \, S$. Fortunately, for lists and other
ADTs, the relational interpretations defined in this way enjoy this
crucial inclusion-preservation property.\looseness=-1

Here, we report our ongoing efforts to add the generalization of ADTs
known as {\em Generalized Algebraic Data Types (GADTs)} to System F in
such a way that parametricity is preserved. In doing so, we insist on
understanding GADTs as types of {\em data structures}, i.e., as types
of containers that can be filled with data.  Since this entails in
particular that GADTs are inductive data type constructors, we might
expect that following the method outlined above will suffice. In
Section~\ref{sec:inductive}, we show that naively doing so results in
relational interpretations of GADTs that do not satisfy the
inclusion-preservation property identified at the end of the preceding
paragraph. This is problematic: if we are to understand GADTs as types
of data structures, then they should certainly satisfy all properties
--- among them the inclusion-preservation property --- expected of
such types.  In Section~\ref{sec:gadts-comp}, we explore a promising
approach to overcoming this issue. This approach consists in defining
the relational interpretation of a GADT through that of its {\em
  completion}, an ADT-like type constructor that contains the original
GADT. In Section~\ref{sec:app} we offer some applications of
parametricity for GADTs obtained using our proposed approach. In
Section~\ref{sec:issues} we discuss some issues that arise when making
our proposed approach precise. Doing so requires defining a source
language (an extension of System F that allows for GADTs), a target
language (a dependent type theory strong enough to encode relations),
and interpretations of each type of the source language as both a type
and a relation in the target language. We point out some difficulties
in the design of the target language, and also offer some thoughts on
how to resolve them. Throughout the paper, we use an Agda-like syntax
to write examples of types and terms of the anticipated target
language. We note, however, that this language might end up being very
different from Agda's type theory. In particular, this early
announcement by no means reports on an attempt to formalize our work
in a proof assistant.

We are not the first to consider parametricity for GADTs. Very recent
progress on the subject has been presented
in \cite{birkedal:essence}. Sieczkowski {\em et al.}~construct there a
parametric model of an extension of System F supporting GADTs, with
the aim of deriving free theorems and representation independence
results. However, their work differs drastically from the line of
research presented here in several ways. First, the semantics
presented by Sieczkowski {\em et al.}~targets
normalization-by-evaluation. By contrast, our work is in no way
concerned with such methods. Second, Sieczkowski et al. make essential
use of guarded recursion through a universe of step-indexed
propositions equipped with a later modality (as exists, e.g., in
Iris). By contrast, we are concerned only with structural recursion in
this work. Third, Sieczkowski {\em et al.}~insist on the importance of
two particular rules of their type system: discriminability and
injectivity of type constructors. By contrast, we are agnostic about
such rules, thus accommodating more diverse host languages. Finally,
and most importantly, the semantics of Sieczkowski {\em et al.}~models
parametricity for GADTs only in those type indices that are {\em
unconstrained}, i.e., that can be promoted to parameters. In
particular, their approach cannot handle free theorems such as the one
presented in Section~\ref{sec:free-thm} for $\Seqoperator$, since
$\pairingoperator$ has a constrained instance of $\Seqoperator$ as
return type. By contrast, we not only recognize the non-uniformity of
GADTs acknowledged by Sieczkowski {\em et al.}, but we also recognize
that this break of uniformity is governed by uniform type constructors
(namely, those constraining the instances of the return types of
GADTs' data constructors), and that this uniformity must be captured
by parametric models of the language at play.




\section{Naive approach: The problem}
\label{sec:inductive}

In this section, we first review Wadler's relational interpretation of
the standard built-in type constructor $\mathsf{List}$ for the ADT of
lists, and then try to extend the method directly to GADTs. As noted
in Section~\ref{sec:intro}, the resulting relational interpretations
for GADTs lack the desired inclusion-preservation
property.\looseness=-1

The type constructor $\mathsf{List}$ for the ADT of lists is given
by:
\begin{equation}\label{eq:lists}
\begin{array}{l}
  \data{\Listoperator : \Set → \Set}  \\
  \quad \niloperator : ∀\{\var{α}\} → \List {\var{α}}\\
  \quad \consoperator : ∀\{\var{α}\} → \var{α} → \List {\var{α}} → \List {\var{α}} 
\end{array}
\end{equation}
Wadler effectively gave a relational interpretation for
$\Listoperator$ informally when he declared two lists
$[a_1,a_2,...,a_n] : \List A$ and $[b_1,b_2,...,b_n] : \List B$ to be
related by $\unary{\Listliftoperator} R$ for a relation $R$ between
the types $A$ and $B$ exactly when each pair of their corresponding
elements is related by $R$, i.e., when he required
\begin{equation*}
   \Listlift R {[a_1,a_2,⋯,a_n]} {[b_1,b_2,⋯,b_n]}
  \quad \text{if and only if} \quad
  ∀ i = 1,⋯,n,\, \rel R {a_i} {b_i}
\end{equation*}
If we represent the type $\Rel A B$ of relations between $A$ and $B$
by the function type $A → B → \Set$, then we can formalize Wadler's
relational interpretation for lists as the (dependent) inductive
family represented
by:\looseness=-1
\begin{equation*}
  \label{eq:list-lift}%
  \begin{array}{l}
    \data {\Listliftoperator :
    ∀ \{\var{α}\, \var{β}\} → \Rel {\var{α}} {\var{β}} →
    \Rel {(\List{\var{α}})} {(\List {\var{β}})}} \\
    \quad
    \nilliftoperator : ∀ \{\var{α}\,\var{β}\} (\var R : \Rel {\var{α}}{\var{β}}) →
    \Listlift {\var R} {\nil} {\nil}\\
    \quad
    \consliftoperator : ∀ \{\var{α}\,\var{β}\} (\var R : \Rel{\var{α}}{\var{β}})
    (\var a : \var {α})(\var b : \var{β})(\var{as} : \List {\var{α}})
    (\var{bs} : \List {\var{β}}) → \\
    \qquad \rel{\var R} {\var a} {\var b} → \Listlift {\var R} {\var{as}} {\var{bs}} →
    \Listlift {\var R} {(\cons {\var a} {\var {as}})} {(\cons {\var b} {\var {bs}})}
  \end{array}
\end{equation*}
Notice that only terms both constructed from the same data constructor
can be related, and that the definition of $\Listliftoperator$ mimics
the recursive structure of $\Listoperator$'s data type
declaration. This ensures in particular that $\Listliftoperator$
preserves inclusions:
%
Given an inclusion $R \subseteq S$ witnessed by
$i: ∀ a\, b → \rel R a b → \rel S a b$, the inclusion
$\Listliftoperator {R} \subseteq \Listliftoperator {S}$ is witnessed
by the function
$\unary \Listliftoperator i: ∀ \ell\, \ell' → \Listlift R \ell {\ell'}
→ \Listlift S \ell {\ell'}$ mapping $\nillift R$ to $\nillift S$ and
$\conslift R {a_h} {b_h} {a_t} {b_t} {w_h} {w_t}$ to
$\conslift S {a_h} {b_h} {a_t} {b_t} {(\unary i {w_h})} {(\binary
  \Listliftoperator i {w_t})}$. %
Moreover, this definition has exactly the feature announced in
Section~\ref{sec:intro}, namely that $\nilliftoperator$ and
$\consliftoperator$ express that $\niloperator$ and $\consoperator$
are parametric, respectively.

The method we used to construct $\Listoperator$ can easily be extended
to other ADTs, or even to more general inductive definitions. This is
the approach explored by the authors of~\cite{bernardy} for generic
inductive families, which encompass, in particular, GADTs. Although
interesting in its own right, this approach fails to recognize GADTs
as data types, in the sense that their relational interpretations do
not necessarily preserve inclusion as is intuitively expected. To
illustrate the issue, consider the GADT of sequences given
by:\looseness=-1 
\begin{equation}
  \label{eq:seq}%
  \begin{array}{l}
    \data{\Seqoperator : \Set → \Set}\\
    \quad\injoperator : ∀ \{\var{α}\} → {\var{α}} → \Seq{\var{α}}\\ 
    \quad\pairingoperator : ∀ \{\var{α}₁\,\var{α}₂\} → 
    \Seq {\var{α}₁} → \Seq{\var{α}₂} → \Seq{(\product{\var{α}₁}{\var{α}₂})}
  \end{array}
\end{equation}
The same method used above yields the following relational
interpretation $\sf\hat{Seq}$ for $\sf Seq$:
\begin{equation*}%
  \label{eq:seq-lift}
  \begin{array}{l}
    \data{\Seqliftoperator : ∀ \{\var{α}\,\var{β}\} → \Rel {\var{α}}{\var{β}} →
    \Rel {(\Seq {\var{α}})} {(\Seq {\var{β}})}} \\
    \quad\injliftoperator : ∀\{\var{α}\,\var{β}\} (\var R : \Rel{\var{α}}{\var{β}})
    (\var a : \var{α})(\var b : \var{β}) → \rel{\var R} {\var a} {\var b} →
    \Seqlift {\var R} {(\inj {\var a})} {(\inj {\var b})}\\
    \quad\pairingliftoperator : ∀ \{\var{α}₁\,\var{α}₂\,\var{β}₁\,\var{β}₂\}
    (\var R₁ : \Rel {\var{α}₁}{\var{β}₁}) (\var R₂ : \Rel {\var{α}₂}{\var{β}₂}) → \\
    \qquad ∀(\var s₁ : \Seq{\var{α}₁})(\var s₂ : \Seq{\var{α}₂})
    (\var t₁ : \Seq{\var{β}₁})(\var t₂ : \Seq{\var{β}₂}) → \\
    \qquad \Seqlift {\var R₁} {\var s₁} {\var t₁} → \Seqlift {\var R₂} {\var s₂}{\var t₂}
    → \Seqlift {(\productlift{\var R₁}{\var R₂})}
    {(\pairing {\var s₁} {\var s₂})} {(\pairing {\var t₁} {\var t₂})}
  \end{array}
\end{equation*}
Here, $\productlift \_ \_$ is the relational interpretation of the
product type constructor $\product \_ \_$ defined on relations
$R₁ : \Rel {A₁} {B₁}$ and $R₂ : \Rel {A₂} {B₂}$ by
$\rel{(\productlift {R₁} {R₂})} {(a₁,a₂)} {(b₁,b₂)} = \product {(\rel
  {R₁} {a₁} {b₁})} {(\rel {R₂} {a₂} {b₂})}$ for all $a₁ : A₁$,
$a₂ : A₂$, $b₁ : B₁$, $b₂ : B₂$.

Now, assuming extensionality for relations (i.e., $R$ is equal to $S$
when they relate the same elements), if $R$ is the equality relation
on $\product \Bool \Bool$, where $\Bool$ is the type of booleans, then
$\Seqliftoperator\, R$ is the equality relation on
$\Seq {(\product \Bool \Bool)}$. On the other hand, if $S$ is the
binary relation on $\product \Bool \Bool$ with only
$\rel S {\pair \false \false} {\pair \true \true}$ uninhabited, then
$\Seqlift S {(\pairing {s₁} {s₂})} {(\pairing {t₁} {t₂})}$ is
uninhabited for any $s₁$, $s₂$, $t₁$, and $t₂$ because $S$ is not a
product of relations. However, $S$ contains $R$, so $\Seqlift S s t$
should be inhabited at least whenever $\Seqlift R s t$ is. That it is
not violates the inclusion-preservation property expected of
$\Seqliftoperator$.\looseness=-1


\section{Completing GADTs: Toward a solution}
\label{sec:gadts-comp}

To remedy the problem exposed in Section~\ref{sec:inductive}, we first
observe that we can obtain an alternative relational interpretation
for $\Seqoperator$ by first embedding it into (a data type that is
essentially) an ADT $\Soperator$ and then constructing the relational
interpretation of $\sf S$ as above. The data type $\sf S$ is given by:
\begin{equation}\label{eq:seq-alt}
  \begin{array}{l}
    \data{\Soperator : \Set → \Set}\\
    \quad\ioperator : ∀ \{\var{α}\} → {\var{α}} → \Stype{\var{α}}\\ 
    \quad\poperator : ∀ \{\var{α}₁\,\var{α}₂\,\var{α}\} →
    (\arrow{\product {\var{α}₁}{\var{α}₂}}{\var{α}}) →
    \Stype {\var{α}₁} → \Stype{\var{α}₂} → \Stype{\var{α}}
  \end{array}
\end{equation}
We can compute the relational interpretation of $\Soperator $ as we
did above for ADTs. This gives:
\begin{equation*}\label{eq:seq-lift}
  \begin{array}{l}
    \data{\Sliftoperator : ∀ \{\var{α}\,\var{β}\} → \Rel {\var{α}}{\var{β}} →
    \Rel {(\Stype {\var{α}})} {(\Stype {\var{β}})}} \\
    \quad\iliftoperator : ∀\{\var{α}\,\var{β}\} (\var R : \Rel{\var{α}}{\var{β}})
    (\var a : \var{α})(\var b : \var{β}) → \rel{\var R} {\var a} {\var b} →
    \Slift {\var R} {(\icons {\var a})} {(\icons {\var b})}\\
    \quad\pliftoperator : ∀ \{\var{α}₁\,\var{α}₂\,\var{β}₁\,\var{β}₂\,\var{α}\,\var{β}\}
    (\var R₁ : \Rel {\var{α}₁}{\var{β}₁}) (\var R₂ : \Rel {\var{α}₂}{\var{β}₂})
    (\var R : \Rel {\var{α}}{\var{β}}) → \\
    \qquad ∀(\var s₁ : \Stype {\var{α}₁})(\var s₂ : \Stype {\var{α}₂})
    (\var t₁ : \Stype {\var{β}₁})(\var t₂ : \Stype {\var{β}₂})
    (\var f : \arrow {\product {\var{α}₁}{\var{α}₂}} {\var{α}})
    (\var g: \arrow {\product{\var{β}₁}{\var{β}₂}} {\var{β}})→ \\
    \qquad \rel{(\arrowlift {(\productlift{\var R₁}{\var R₂})} {\var R})} {\var f} {\var g} →
    \Slift {\var R₁} {\var s₁} {\var t₁} → \Slift {\var R₂} {\var s₂}{\var t₂}
    → \Slift {\var R}
    {(\pcons {\var f} {\var s₁} {\var s₂})} {(\pcons {\var g} {\var t₁} {\var t₂})}
  \end{array}
\end{equation*}
Here, $\arrowlift\_\_$ is the relational interpretation of the
function type constructor $\arrow\_\_$ defined for any $R:\Rel A B$
and $S : \Rel C D$ by
$\rel{(\arrowlift R S)} f g = ∀ (a : A)(b : B) → \rel R a b → \rel S
{(\unary f a)} {(\unary g b)}$.

Note that there is an embedding $\iota$ of $\Seqoperator$ into
$\Soperator$, obtained by mapping a sequence of the form $\inj a$ to
$\icons a$ and one of the form $\pairing {s₁} {s₂}$ with $s₁ : \Seq
{A₁}$ and $s₂ : \Seq {A₂}$ to $\pcons {\id {\product {A₁}{A₂}}}
{(\unary {ι} s₁)} {(\unary {ι} s₂)}$. With $ι$ and $\Sliftoperator$ in
hand, we can now define the relational interpretation of
$\Seqoperator$ by $\Seqlift R s t = \Slift R {(\unary \iota
s)}{(\unary \iota t)}$. It is easy to see that the relational
interpretation $\Seqliftoperator$ not only ensures that the
constructors of $\Seqoperator$ are parametric, but also satisfies the
inclusion-preservation property. Indeed, if $i : R \subseteq S$ then
we can define $\unary \Sliftoperator i : \unary \Sliftoperator
R \subseteq \unary
\Sliftoperator S$ simply by mapping $\ilift R a b w$ to
$\ilift S a b {(\unary i w)}$ and
$\binary{\quinary {\quinary \pliftoperator {R₁} {R₂} {R} {s₁} {s₂}}
  {t₁} {t₂} {f} {g} {w}} {w₁} {w₂}$ to
$\binary{\quinary {\quinary \pliftoperator {R₁} {R₂} {S} {s₁} {s₂}}
  {t₁} {t₂} {f} {g} {(i ∘ w)}} ({\binary \Sliftoperator i {w₁}})
({\binary \Sliftoperator i {w₂}})$.

The method described for $\Seqoperator$ can easily be extended to any
GADT. Indeed, given any GADT $G$, its completion $G_c$ is obtained by
first identifying each constructor $c : ∀ \bar {α}. Φ → \unary G {\bar
{Ψ}}$ whose return instance $\bar {Ψ}$ of $G$ is not simply $\bar {α}$
and replacing it by $c_c : ∀ \bar{α}\bar{β}. (\bar{Ψ → β}) → Φ
→ \unary G {\bar {β}}$. The name {\em completion} is justified by the
embedding $\iota_G$ of $G$ into $G_c$ defined on each element of the
form $\unary c x$ by $\binary {c_c} {\bar{\id{Ψ}}} x$. The completion
$G_c$ is akin to an ADT in the sense that the return type of each of
its constructors is a variable instance of $G_c$; however, it is not
an ADT {\em per se} because each of its constructors
$c_c:∀\bar{α}\bar{β}. (\bar{Ψ → β}) → Φ → \unary G {\bar {β}}$
quantifies over a possibly non-empty vector $\bar{α}$ of type
variables as well as over the vector $\bar{β}$ of type variables
appearing in $c_c$'s return instance. Nevertheless, the relational
interpretation $\hat{G_c}$ of $G_c$ can be constructed as in
Section~\ref{sec:inductive}, and the relational interpretation $\hat
G$ of $G$ is then defined by restriction as $\rel {\unary {\hat G} R}
g h = \rel {\unary {\hat{G_c}} R} {(\unary {\iota_G} g)} {(\unary
{\iota_G} h)}$. The resulting relational interpretation $\hat G$
inherits the inclusion-preservation property of $\hat{G_c}$ and also
ensures that each of the constructors $c$ of $G$ is parametric.

When $G$ is simply an ADT its completion $G_c$ is just a copy of $G$
itself (since then there is no constructor $c : ∀ \bar {α}. Φ → \unary
G {\bar {Ψ}}$ whose return instance $\bar {Ψ}$ is not $\bar {α}$). In
this case, the relational interpretation $\hat G$ defined in this
section is trivially the same as the relational interpretation
associated to it by the method described in
Section~\ref{sec:inductive}. The construction of $\hat G$ in this
section for an arbitrary GADT $G$ thus produces true generalizations
of the relational interpretations of ADTs, as introduced for lists by
Wadler and subsequently developed for more general data types by
others (see, e.g., \cite{bernardy,jg-nested}).


\section{Applications}
\label{sec:app}

\subsection{Free theorems}
\label{sec:free-thm}

The relational interpretations defined for GADTs in the previous
section can be used to establish free theorems à la Wadler. We
illustrate this with the data type $\Seqoperator$. Given a type $A$
and an element $a : A$, we say that a sequence $s : \Seq A$ {\em
contains only $a$ as data} either when $s = \inj a$ or when $s
= \pairing {s₁} {s₂}$ --- which forces $a$ to be of the form $\pair
{a₁} {a₂}$ --- and $s₁$ contains only $a₁$ as data and $s₂$ contains
only $a₂$ as data.

\begin{proposition}
  Let $f : ∀ α. α → \Seq {α}$ be a parametric polymorphic
  function. For any type $A$ and any element $a:A$, the sequence
  $\unary {\unary f A} a$ contains only $a$ as data.
\end{proposition}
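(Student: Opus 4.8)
The plan is to feed parametricity a single, carefully chosen relation and then read off the conclusion by induction on the structure of $\unary{\unary f A} a$. Concretely, fix a type $A$ and an element $a : A$, and let $\Delta_a : \Rel A A$ be the relation relating only $a$ to itself, so that $\rel{\Delta_a} x y$ holds iff $x = a$ and $y = a$. Since $\rel{\Delta_a} a a$ trivially holds, parametricity of $f$ (instantiated at $A$ on both sides, with the related inputs $a$ and $a$) yields $\Seqlift{\Delta_a}{(\unary{\unary f A} a)}{(\unary{\unary f A} a)}$. The proposition then follows from the purely structural claim below, applied to $s = \unary{\unary f A} a$.

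\medskip
\noindent\textbf{Claim.} For every type $A'$, every $a' : A'$, and every $s : \Seq{A'}$, if $\Seqlift{\Delta_{a'}}{s}{s}$ then $s$ contains only $a'$ as data.
\medskip

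I would prove the Claim by induction on $s$, unfolding $\Seqlift{\Delta_{a'}}{s}{s}$ as $\Slift{\Delta_{a'}}{(\unary \iota s)}{(\unary \iota s)}$ through the completion and using that each constructor of $\hat{\Soperator}$ relates only terms built from the same constructor. If $s = \inj x$, then $\unary \iota s = \icons x$, so the relation must come from $\iliftoperator$, whence $\rel{\Delta_{a'}} x x$ forces $x = a'$ and $s = \inj{a'}$ contains only $a'$. If $s = \pairing{s_1}{s_2}$ with $s_1 : \Seq{A_1}$, $s_2 : \Seq{A_2}$, and $A' = \product{A_1}{A_2}$, then $\unary \iota s = \pcons{\id{A'}}{(\unary \iota {s_1})}{(\unary \iota {s_2})}$, so the relation must come from $\pliftoperator$, supplying relations $R_1 : \Rel{A_1}{A_1}$ and $R_2 : \Rel{A_2}{A_2}$ together with $\Seqlift{R_1}{s_1}{s_1}$, $\Seqlift{R_2}{s_2}{s_2}$, and the compatibility condition $\rel{(\arrowlift{(\productlift{R_1}{R_2})}{\Delta_{a'}})}{\id{A'}}{\id{A'}}$. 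Writing $a' = \pair{a_1}{a_2}$, this condition says exactly that $\rel{R_1}{x_1}{y_1}$ and $\rel{R_2}{x_2}{y_2}$ together force $\pair{x_1}{x_2} = a'$ and $\pair{y_1}{y_2} = a'$. Using a routine sub-lemma that $\Seqlift R u u$ forces $R$ to be nonempty (every sequence has at least one $\inj$-leaf, at which relatedness forces $R$ to relate some element to itself), I conclude that $R_1 \subseteq \Delta_{a_1}$ and $R_2 \subseteq \Delta_{a_2}$. The inclusion-preservation property established in Section~\ref{sec:gadts-comp} then upgrades $\Seqlift{R_1}{s_1}{s_1}$ to $\Seqlift{\Delta_{a_1}}{s_1}{s_1}$ and similarly for $s_2$, so the induction hypothesis gives that $s_1$ contains only $a_1$ and $s_2$ contains only $a_2$; by definition $s = \pairing{s_1}{s_2}$ then contains only $a' = \pair{a_1}{a_2}$.

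The step I expect to be the crux is the pairing case, and specifically the handling of the auxiliary relations $R_1$ and $R_2$ produced by $\pliftoperator$. Unlike the list interpretation of Section~\ref{sec:inductive}, where a single relation threads unchanged through every node, the completion-based interpretation $\hat{\Seqoperator}$ introduces fresh, existentially quantified relations at each $\pairingoperator$ node, and these need not be the singletons required by the induction hypothesis. Pinning them down is precisely what forces me to combine the compatibility condition on $\id{A'}$ (to bound $R_1$ and $R_2$ above by $\Delta_{a_1}$ and $\Delta_{a_2}$) with the inclusion-preservation property (to transport the inductive hypotheses along those bounds); I would therefore secure the small nonemptiness sub-lemma first, since it is what guarantees the compatibility condition genuinely constrains each of $R_1$ and $R_2$ rather than holding vacuously.
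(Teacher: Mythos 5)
Your proof is correct and takes essentially the same approach as the paper's: the same singleton relation $\delta_a$, the same appeal to parametricity to obtain $\Seqlift{\delta_a}{(\unary{\unary f A} a)}{(\unary{\unary f A} a)}$, and the same structural induction whose pairing case bounds the existentially quantified relations $R_1, R_2$ by $\delta_{a_1}, \delta_{a_2}$ via the compatibility condition on $\id{}$ and then transports the inductive hypotheses along those bounds by inclusion-preservation. If anything, you are more careful than the paper at one point: the paper asserts the inclusions $R_i \subseteq \delta_{a_i}$ follow ``from $w$'' alone, whereas you correctly observe that this needs each $R_i$ to be nonempty (otherwise the compatibility condition constrains nothing), which your routine nonemptiness sub-lemma, provable by the same structural induction, supplies.
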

\begin{proof}
  Since the function $f$ is parametric, for any types $A$ and $B$, any
  relation $R : \Rel A B$, any $a: A$ and any $b:B$, the sequences
  $\unary {\unary f A} a$ and $\unary {\unary f B} b$ are related by
  $\unary \Seqliftoperator R$ whenever $\rel R a b$ holds.  Now, fix a
  type $A$ and an element $a:A$, and consider the relation $δ_a : \Rel
  A A$ that relates $x$ and $y$ only when both are $a$ itself.
  Because $\rel {δ_a} a a$, we get a witness of $\Seqlift {δ_a}
  {(\unary {\unary f A} a)} {(\unary {\unary f A} a)}$. It remains to
  show, by induction on $s : \Seq A$, that if $\Seqlift {δ_a} s s$
  then $s$ contains only $a$ as data. If $s = \inj x$ for some $x :
  A$, then the element we have in $\Seqlift {δ_a} s s = \Slift {δ_a}
  {(\icons x)} {(\icons x)}$ must be of the form $\ilift {δ_a} x x
  {w}$ with $w : \rel{δ_a} x x$. By definition of $δ_a$, $w$ entails
  that $x$ is $a$ itself, and thus $s = \inj a$ does indeed contain
  only $a$ as data. If $s = \pairing {s₁} {s₂}$, then $A$ is of the
  form $\product {A₁} {A₂}$, $a$ is of the form $\pair {a₁} {a₂}$, and
  the element we have in $\Seqlift {δ_a} s s = \Slift {δ_a} {(\pcons
  {\id {\product {A₁} {A₂}}} {(\unary {ι} s₁)} {(\unary {ι} s₂)})}
  {(\pcons {\id {\product {A₁} {A₂}}} {(\unary {ι} s₁)} {(\unary {ι}
  s₂)})}$ must be of the form $\binary{\quinary
  {\quinary \pliftoperator {R₁} {R₂} {δ_a} {s₁} {s₂}} {s₁} {s₂} {\id
  {\product {A₁} {A₂}}} {\id {\product {A₁} {A₂}}} {w}} {w₁} {w₂}$
  with $R₁:\Rel {A₁} {A₁}$, $R₂ : \Rel {A₂} {A₂}$, $w
  : \rel{(\arrowlift{(\productlift {R₁} {R₂})} {δ_a})} {\id {\product {A₁}
  {A₂}}} {\id {\product {A₁} {A₂}}}$, $w₁ : \Seqlift {R₁} {s₁} {s₁}$,
  and $w₂ : \Seqlift {R₂} {s₂} {s₂}$. From $w$, we can prove that
  $R₁ \subseteq δ_{a₁}$ and $R₂ \subseteq δ_{a₂}$. Then from $w₁$ and
  $w₂$, inclusion-preservation of $\Seqliftoperator$ gives witnesses
  $v₁$ and $v₂$ of $\Seqlift {δ_{a₁}} {s₁} {s₁}$ and $\Seqlift
  {δ_{a₂}} {s₂} {s₂}$, respectively. By the induction hypothesis, $v₁$
  gives that $s₁$ has only $a₁$ as data and $v₂$ gives that $s₂$
  contains only $a₂$ as data. Thus, $s = \pairing {s₁} {s₂}$
  contains only $a = \pair {a₁} {a₂}$ as data.\,\looseness=-1
\end{proof}


\subsection{Graph lemma}
\label{sec:graph}

If $R$ is the graph of a function $f : A → B$, then $\unary {\hat G}
R$ is of particular interest. When $G$ is an ADT, then the graph lemma
says that $\unary{\hat G} R$ is exactly the graph of the function
$\mapfunap G f : \unary G A → \unary G B$, where $\mapfun G$ is the
usual map function associated with the ADT $G$. However, when $G$ is a
more general GADT, then there can be no standard map function
associated with $G$ (\cite{jp19,jc22}). In particular, the relation
$\unary{\hat{G}} R$ is not necessarily the graph of a function from
$\unary G A$ to $\unary G B$. Significantly, $\unary{\hat{G}} R$ can
still be understood as the graph of a {\em partial} function from
$\unary G A$ to $\unary G B$; see
Proposition~\ref{prop:graph-lemma-gadts} below. The key observation is
that the completion $G_c$ of $G$ has a map function $\mapfun {G_c} :
(A → B) → \unary{G_c} A → \unary{G_c} B$ whose application to $f$ is
defined for each constructor $c_c$ by $\mapfunaparg {G_c} f {(\binary
{c_c} h x)} = \binary {c_c} {(f\circ h)} x$.

\begin{proposition}
  \label{prop:graph-lemma-gadts}%
  Let $f : A → B$ be a function and let $G$ be a GADT. If $R$ is the
  graph of $f$, then $\unary{\hat G} R$ is the graph of a partial
  function, i.e., for any $x : \unary G A$, there is at most one
  $y : \unary G B$ such that $\ternary{\hat{G}} R x y$ is inhabited.
\end{proposition}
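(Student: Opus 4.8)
The plan is to reduce the statement to the completion $G_c$ and then argue by structural induction on $x$. By the definition of a GADT's relational interpretation (Section~\ref{sec:gadts-comp}), $\ternary{\hat{G}}{R}{x}{y}$ is inhabited exactly when $\ternary{\hat{G_c}}{R}{(\unary{\iota_G}{x})}{(\unary{\iota_G}{y})}$ is, and $\iota_G$ is injective; so it suffices to show that, with $x$ fixed, at most one $y$ makes the latter inhabited. I would exploit two features of elements in the image of $\iota_G$. First, $\iota_G$ sends each constructor of $G$ to the corresponding constructor of $G_c$, and $\hat{G_c}$ (built as in Section~\ref{sec:inductive}) relates only terms with the same head constructor; hence the outermost constructor of $y$ is forced to match that of $x$, fixing the top of $y$'s shape. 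Second, $\iota_G$ installs only \emph{identity} coercions as the function components of the completed constructors, a fact I will read off the function witnesses.

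After peeling off the outermost constructor, the witness of $\ternary{\hat{G_c}}{R}{(\unary{\iota_G}{x})}{(\unary{\iota_G}{y})}$ splits, following the clauses of $\hat{G_c}$, into a function witness comparing the two identity coercions and sub-witnesses for the payload. For an \emph{ADT-like} constructor (return index already a bare variable, e.g.\ a list cons or an $\mathsf{If}$) there is no coercion to compare and the sub-witnesses carry \emph{fixed} relations — equality at constant index types and $R$ at the variable index — each of which is again a graph, so the induction hypothesis applies directly to the recursive subterms and pins them down. The delicate case is the \emph{completed} constructors (those altered by completion because their return index is not a bare variable, e.g.\ $\pairingoperator$ for $\Seqoperator$): here the sub-witnesses carry relations $R_i$ that are \emph{existentially quantified} inside $\hat{G_c}$ and are in general not graphs. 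Indeed $\ternary{\hat{G_c}}{R}{u}{v}$ is itself far from functional — it relates a completed term to many others by varying these existential relations — so the restriction to the identity-coercion image of $\iota_G$ is essential, and the bare ``at most one for a graph'' hypothesis does not apply to the subterms.

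The main obstacle is exactly this existential quantification, and overcoming it is the technical heart of the argument. I would strengthen the induction hypothesis to a statement valid for an \emph{arbitrary} relation $R$, in the spirit of the free-theorem proof above (where ``from $w$ we can prove $R_1 \subseteq \delta_{a_1}$'', Section~\ref{sec:free-thm}): namely, that whenever $\ternary{\hat{G_c}}{R}{u}{v}$ is inhabited by elements carrying identity coercions, one may take each existential relation to be the diagonal relating exactly the variable-type leaf-data that the sub-witness actually connects. That this loses no generality follows by induction from inclusion-preservation (Section~\ref{sec:gadts-comp}): replacing an $R_i$ by such a diagonal shrinks its interpretation, while the function witness $\rel{(\arrowlift{\ldots}{R})}{\id{}}{\id{}}$ — which, the coercions being identities, asserts precisely a containment of the combined sub-relation into $R$ — is preserved under this shrinking since the smaller product still lies in $R$. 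Once all existential relations are diagonals, the totality of function witnesses collapses to the single demand that the product of these leaf-diagonals be contained in $R$, the identity coercions merely assembling the leaves into the index. Specializing $R$ to the graph of $f$, this demand becomes a factoring condition expressing each $B$-leaf of $y$ as a component of $f$ evaluated on the leaves of $x$; it may fail to be satisfiable (accounting for the \emph{partiality}), but when satisfiable it determines every leaf of $y$ uniquely. With the shape already forced to match $x$ and every leaf thereby determined, injectivity of the index type constructors recovers the subterms' types and injectivity of $\iota_G$ descends the conclusion from $G_c$ back to $G$, yielding the unique $y$.
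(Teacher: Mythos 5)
Your proof is correct, but it is genuinely different from the paper's. The two agree only on the opening move: transferring the hypotheses to the completion along $\iota_G$ and using its injectivity. From there the paper finishes in one sentence: since $G_c$ has a map function $\mapfun{G_c}$ (with $\mapfunaparg {G_c} f {(\binary {c_c} h x)} = \binary {c_c} {(f\circ h)} x$), it asserts that $\unary{\hat{G_c}}{R}$ is itself the graph of a function, whence $\unary{\iota}{y} = \unary{\iota}{y'}$ outright. You instead restrict to terms carrying identity coercions (the image of $\iota_G$) and run a structural induction with a strengthened hypothesis in which the existentially quantified relations of a $\hat{G_c}$-witness are normalized to the minimal relations connecting the leaf data actually present, so that the tower of function witnesses collapses to the single condition that the assembled data of $y$ equal $f$ applied to the assembled data of $x$; uniqueness of $y$ follows.

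What your longer route buys is significant: your side remark that $\unary{\hat{G_c}}{R}$ is ``far from functional'' is correct, and it shows that the lemma the paper invokes is false as stated for the inductive, unquotiented completion of Section~\ref{sec:gadts-comp}. Concretely, let $R$ be equality on $\product{\Bool}{\Bool}$ (the graph of the identity) and $u = \pcons{\id{}}{(\icons{\true})}{(\icons{\false})}$. Then $\Slift{R}{u}{u}$ is inhabited, but so is $\Slift{R}{u}{v}$ for the distinct term $v = \pcons{g}{(\icons{\pair{\true}{\false}})}{(\icons{\true})}$ with $g$ the first projection: take the existential relations to be the singletons relating $\true$ to $\pair{\true}{\false}$ and $\false$ to $\true$. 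So the mere existence of $\mapfun{G_c}$ does not yield a graph lemma for $G_c$ --- the usual inductive proof breaks exactly where you said it would, because the existential relations in a $\pliftoperator$-witness need not be graphs --- and single-valuedness holds only after restricting both sides to the image of $\iota_G$ (equivalently, after quotienting $G_c$ by the equations implicit in functorial completion). That restricted statement is precisely what your induction establishes, so your extra work is not redundant; something like it is needed to justify the step the paper asserts. Two small repairs to your write-up: (i) inclusion-preservation as stated ($R \subseteq S$ implies $\unary{\hat{G}}{R} \subseteq \unary{\hat{G}}{S}$) points the wrong way for your shrinking step; what legitimizes replacing each existential relation by its minimal sub-relation is the strengthened induction hypothesis itself, rebuilding sub-witnesses bottom-up, with inclusion used only to preserve the function witnesses; (ii) your appeal to injectivity of the product type constructor is a genuine extra assumption --- the paper claims agnosticism about such rules, though its own proof hides the same dependency inside the injectivity of $\iota_G$.
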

\begin{proof}
  Let $x: \unary G A$ and $y,y': \unary G B$ be such that
  $\ternary{\hat{G}} R x y$ and $\ternary{\hat{G}} R x y'$ are
  inhabited. Then
  $\ternary{\hat{G_c}} R {(\unary {ι} x)} {(\unary {ι} y)}$ and
  $\ternary{\hat{G_c}} R {(\unary {ι} x)} {(\unary {ι} {y'})}$ are
  inhabited as well. Since $G_c$ has a map function,
  $\unary{\hat{G_c}} R$ is also the graph of a function, so
  $\unary {ι} y = \unary {ι} {y'}$, and thus the injectivity of the
  embedding $ι$ implies $y = y'$.
\end{proof}

Using Proposition~\ref{prop:graph-lemma-gadts}, we can {\em define} a
``function mapping operation'' $\gmapfun G$ for $G$ by declaring
$\gmapfunaparg G f x$ to be $y$ if $\ternary{\hat{G}} R x y$ is
inhabited and undefined if no such $y$ exists. This definition of
$\gmapfun G$ generalizes the notion of mappability introduced
in~\cite{jc22}: when $f$ is mappable over $x$ in the specification $G$
in the sense of~\cite{jc22}, then the partial function $\gmapfunap G
f$ above is defined on $x$. The converse does not hold, however, as
shown in Example~\ref{ex:cex}.

\begin{example}
  \label{ex:cex} Consider the functions $f : \product\Bool\Bool
  → \product\Bool\Bool$ and $g: \Bool → \Bool$ defined by $\unary f
  {\pair x y} = \pair y x$ and $\unary g x = \neg x$. It should be
  fairly intuitive to the reader that we can map (for $\Seqoperator$)
  the function $\product f g$ over the sequence $s = \pairing {(\inj
  {\pair \true \false})} {(\inj \true)}$, with result $s' = \pairing
  {(\inj {\pair \false \true})} {(\inj \false)}$. More formally,
  according to the algorithm of~\cite{jc22}, $\product f g$ is indeed
  mappable over $s$ in the specification $\Seqoperator$, with result
  $s'$. In fact, writing $R$ for the graph of $\product f g$, we can
  give an actual witness of $\Seqlift R s {s'}$,
  namely\looseness=-1 \begin{displaymath} \binary{ \quinary
  {\quinary \pliftoperator {R₁} {R₂} {R} {(\icons
  {\pair \true \false})} {(\icons \true)}} {(\icons
  {\pair \false \true})} {(\icons \false)} {\id {}} {\id {}} {w}}
  {(\ilift {R₁} {\pair \true \false} {\pair \false \true} {w₁})}
  {(\ilift {R₂} \true \false {w₂})} \end{displaymath} Here, $R₁$ is
  the graph of $f$, $R₂$ is the graph of $g$, $w$ is a proof of
  inclusion (actually equality) of $\productlift {R₁} {R₂}$ in $R$,
  and $w₁$ and $w₂$ are witnesses of $\unary f {\pair \true \false}
  = \pair \false \true$ and $g \true = \false$, respectively. However,
  the partial function $\gmapfunap {\Seqoperator} (\product f g)$ is
  also defined on elements on which the algorithm of~\cite{jc22} would
  not consider $\product f g$ to be mappable, such as $t = \pairing
  {(\pairing {(\inj \true)} {(\inj \false)})} {(\inj \true)}$. Indeed,
  the algorithm finds $\product f g$ to not be mappable over $t$
  because $f$ is not of the form $\product {f₁} {f₂}$ but the first
  argument of the outer $\pairingoperator$ in $t$ is again constructed
  from $\pairingoperator$. Nevertheless, $\gmapfunaparg {\Seqoperator}
  {(\product f g)} t$ still exists and equals $t' = \pairing
  {(\pairing {(\inj \false)} {(\inj \true)})} {(\inj \false)}$ because
  there is a witness of $\Seqlift R t {t'}$,
  namely \begin{displaymath} \binary{ \quinary
  {\quinary \pliftoperator {R₁} {R₂} {R} {(\pcons {\id{}}
  {(\icons \true)} {(\icons \false)})} {(\icons \true)}} {(\pcons
  {\id{}} {(\icons \false)} {(\icons {\true})})} {(\icons {\false})}
  {\id {}} {\id {}} {w}} {w₁} {(\ilift {R₂} \true \false
  {w₂})} \end{displaymath} Here, $R₂$ is again the graph of $g$, and
  $w₂$ is again a witness of $g \true = \false$, but $R₁$ is the
  relation that only relates $\pair \true \false$ with
  $\pair \false \true$ (and nothing else), $w$ is a witness that
  $\productlift {R₁} {R₂}$ is included (strictly!) in $R$, and $w₁$ is
  of the form \begin{displaymath} \binary{ \quinary
  {\quinary \pliftoperator {R₁'} {R₁''} {R₁} {(\icons \true)}
  {(\icons \false)}} {(\icons \false)} {(\icons \true)} {\id {}} {\id
  {}} {z}} {(\ilift {R₁'} \true \false {w₁'})} {(\ilift
  {R₁''} \false \true {w₁''})} \end{displaymath} Here, $R₁'$ relates
  $\true$ with $\false$ (and nothing else) with witness $w₁'$, $R₁''$
  relates $\false$ with $\true$ (and nothing else) with witness
  $w₁''$, and $z$ is a witness of inclusion (actually equality) of
  $\productlift {R₁'} {R₁''}$ in $R₁$.
\end{example}


\section{Issues}
\label{sec:issues}

The research programme outlined above seems eminently
reasonable. However, to carry it out precisely, and thus to obtain
results such as those in Section~\ref{sec:app}, we need to define a
source language extending System F with built-in GADTs and a target
(dependent) type theory strong enough to express the relational
interpretations described in Sections~\ref{sec:inductive}
and~\ref{sec:gadts-comp}. In trying to do so, the following issues
concerning the target type theory arise:
\begin{enumerate}
\item\label{it:prrelations}
We have chosen to model relations between $A$ and $B$ as functions $A
  → B → \Set$, i.e., as proof-relevant relations. However, we have
  also freely used notions most naturally associated to
  proof-irrelevant relations, such as inclusion of relations. A
  rigorous treatment would either replace inclusion with an operation
  mapping witnesses to witnesses, or replace $\Set$ by a
  proof-irrelevant sort of propositions such as Coq's $\mathsf{Prop}$.
  The former choice amounts to representing relations as spans and
  replacing inclusions of relations by morphisms between those spans,
  whereas the latter choice provides direct support for the
  proof-irrelevant notions used herein.
\item\label{it:axiom-K} In Section~\ref{sec:graph}, we investigated a
  graph lemma for GADTs. There again, proof-relevance plays an
  important role. Indeed, the graph of a function $f: A → B$ is the
  relation $\Graph f : \Rel A B$ defined by
  $\Graphap f a b = (\unary f a ≡ b)$, where $≡$ is the equality type
  former of the prospective target type theory. Whether or not this
  relation $\Graph f$ is proof-relevant depends on whether or not the
  target type theory supports a version of Axiom K. Assuming Axiom K
  is, however, problematic: in a language like Agda, for example,
  Axiom K makes it possible to prove that the data constructors of
  inductive data types are injective. This is in direct opposition to
  the data constructors' expected parametric behavior.
\item\label{it:impredicative} To interpret GADTs correctly, the target
  type theory needs an impredicative universe that supports inductive
  constructions. The Calculus of Inductive Constructions offers such a
  universe (namely, $\mathsf{Prop}$) at the bottom of its
  hierarchy. However, this universe is proof-irrelevant, and thus is
  not suitable for our purposes. Indeed, carrying out our
  constructions in such a universe would effectively identify all data
  structures of any given instance of a GADT. In addition, it is
  well-known that impredicativity is inconsistent with strong
  dependent sums, which eliminates some obvious candidates for the
  target type theory.\looseness=-1
\end{enumerate}
Resolving Point~\eqref{it:prrelations} may require new technical
ideas, but we do not expect it to pose fundamental
difficulties. Points~\eqref{it:axiom-K} and~\eqref{it:impredicative}
pose a different challenge: either we design a target type theory with
the desired features, or we prove that we cannot. The former would
provide a framework for understanding parametricity of (languages
with) GADTs. The latter would definitively show that GADTs understood
as types of {\em data structures} cannot be parametric, and thus the
claim that GADTs {\em generalize} ADTs (implicit in the ``GADT''
terminology) is not justified.


\vspace*{0.1in}

\noindent%
{\bf Acknowledgment.} This work was supported by NSF awards CCF1906388
and CCF2203217.

\bibliographystyle{./entics}
\bibliography{bib}

\end{document}